\DeclareMathOperator{\Real}{\textrm{Re}\,}
\newenvironment{corollary}{\vspace*{1ex}\newline\textbf{Corollary.}\em}{\newline}
\newenvironment{lemma}{\vspace*{1ex}\newline\textbf{Lemma 1.}\em}{\newline}
\newenvironment{proof}{\vspace*{1ex}\textbf{Proof.}}{\hfill$\Box$}
\begin{document}
\title{Green's function for the wavized Maxwell fish-eye problem}
\author{Rados{\l}aw Szmytkowski \\*[3ex]
Atomic Physics Division,
Department of Atomic Physics and Luminescence, \\
Faculty of Applied Physics and Mathematics,
Gda{\'n}sk University of Technology, \\
Narutowicza 11/12, 80--233 Gda{\'n}sk, Poland \\
email: radek@mif.pg.gda.pl}
\date{}
\maketitle
\begin{center}
\textbf{Published as: J.\ Phys.\ A 44 (2011) 065203} \\*[1ex]
\textbf{doi: 10.1088/1751-8113/44/6/065203} \\*[5ex]
\end{center}
\begin{abstract}
Unique transformation properties under the hyperspherical inversion
of a partial differential equation describing a stationary scalar
wave in an $N$-dimensional ($N\geqslant2$) Maxwell fish-eye medium
are exploited to construct a closed form of the Green's function for
that equation. For those wave numbers for which the Green's function
fails to exist, the generalized Green's function is derived.
Prospective physical applications are mentioned.
\vskip3ex
\noindent
\textbf{Key words:} Maxwell's fish-eye problem; Green's function;
Legendre functions; scalar wave optics; gradient-index (GRIN) optics
\vskip1ex
\noindent
\textbf{PACS:} 02.30.Jr, 02.30.Gp, 42.25.Bs, 42.79.Ry
\vskip1ex
\noindent
\textbf{MSC:} 35J08, 78A10
\end{abstract}
%
%
\section{Introduction}
\label{I}
\setcounter{equation}{0}
In 1854, Maxwell \cite{Maxw54} pointed at a remarkable property of an
infinite optical medium with the refraction index
\begin{equation}
n_{\mathrm{fe}}(\boldsymbol{r})
=\frac{2n_{0}\rho^{2}}{r^{2}+\rho^{2}}
\qquad (\rho>0).
\label{1.1}
\end{equation}
Within the framework of the geometrical optics, he proved that paths
of all rays emitted from an arbitrarily located point
$\boldsymbol{r}'$ are circles having two points in common: the source
point $\boldsymbol{r}'$ and the image point
$-\rho^{2}\boldsymbol{r}'/r^{\prime\,2}$. For this medium Maxwell
coined the name `the fish-eye'. In 1926, Carath{\'e}odory
\cite{Cara26} observed that there is a geometrical correspondence
between the circular rays in the fish-eye medium and the geodesics on
a sphere. This geometric thread was pursued further by other
researchers within the group-theoretical framework (cf, e.g., Ref.\
\cite{Wolf04}).

It seems that Demkov and Ostrovsky \cite{Demk71} were the first to
discuss the wavized scalar fish-eye problem. Specifically, they
considered the equation
\begin{equation}
\left[\boldsymbol{\nabla}^{2}
+\frac{4\nu(\nu+1)\rho^{2}}{(r^{2}+\rho^{2})^{2}}\right]
\Psi(\boldsymbol{r})=0
\qquad (\rho>0)
\label{1.2}
\end{equation}
in $\mathbb{R}^{3}$, subject to the boundary condition that
$\Psi(\boldsymbol{r})$ vanishes at infinity. They proceeded in two
directions. First, they solved analytically a spectral problem with
$\nu$ being an eigenparameter and showed that the resulting spectrum
is purely discrete and eigenfunctions may be expressed in terms of
the Gegenbauer polynomials. Second, they proved that Eq.\ (\ref{1.2})
possesses a certain remarkable transformation property under the
geometrical inversion in a certain class of spheres, and ingeniously
exploited this fact to construct a closed form of the relevant
Green's function in $\mathbb{R}^{3}$. Group-theoretical properties of
the scalar fish-eye wave equation (\ref{1.2}) were then investigated
in $\mathbb{R}^{2}$ by Frank \emph{et al.\/} \cite{Fran90,Fran91}.
Lately, the two-dimensional fish-eye medium has been studied by
Makowski and G{\'o}rska \cite{Mako09} in the context of the
construction of pertinent coherent states. Finally, in two very
recent papers, Leonhardt \cite{Leon09a} and Leonhardt and Philbin
\cite{Leon09b} have argued that the geometric-optical perfect
focusing property of the fish-eye medium, discovered by Maxwell,
holds as well within the wave-optics framework; that issue will be
critically reexamined in our upcoming work, with the aid of the
results presented below.

The present paper is the first out of a series of several reports in
which we shall expose results of our research on the wave properties
of the Maxwell fish-eye and related media. Here, we derive a closed
form of the Green's function for the scalar fish-eye wave equation in
$\mathbb{R}^{N}$, $N\geqslant2$. The particular method we employ
generalizes the aforementioned one used by Demkov and Ostrovsky in
the case of $N=3$ and exploits a peculiar transformation property of
the $N$-dimensional fish-eye equation under the hyperspherical
inversion.

The structure of the paper is as follows. In Section \ref{II}, we
investigate transformation properties of a class of partial
differential equations under the hyperspherical inversion, with a
special focus on the $N$-dimensional fish-eye equation. The results
of that investigation are used in Section \ref{III} to construct the
Green's function for the fish-eye problem. The case when the Green's
function fails to exist and is to be replaced by the generalized
Green's function is considered in Section \ref{IV}. Prospective
physical applications of the results are briefly discussed in Section
\ref{V}. The paper ends with an appendix, in which a number of
closed-form representations of the derivative $[\partial
P_{\nu}^{-N/2+1}(x)/ \partial\nu]_{\nu=n+N/2-1}$, with $x\in(-1,1)$,
$N\in\mathbb{N}\setminus\{0,1\}$ and $n\in\mathbb{N}$, required in
Section \ref{IV}, are displayed. The list of references attached has
been intended to contain representative items rather than to be a
comprehensive one. An exhaustive listing of works relevant to the
Maxwell fish-eye problem will be included in one of our forthcoming
papers.
%
%
\section{Transformation properties of a class of partial differential
equations under the hyperspherical inversion}
\label{II}
\setcounter{equation}{0}
At first, we establish the following
\begin{lemma}
If $\Psi(\boldsymbol{r})$ \emph{(}with
$\boldsymbol{r}\in\mathbb{R}^{N}$, $N\geqslant2$\emph{)} satisfies
the equation
\begin{equation}
\left[\boldsymbol{\nabla}^{2}+k^{2}n^{2}(\boldsymbol{r})\right]
\Psi(\boldsymbol{r})=0,
\label{2.1}
\end{equation}
then for arbitrary $R\in\mathbb{R}$ and
$\boldsymbol{a},\boldsymbol{b}\in\mathbb{R}^{N}$ it holds that
\begin{equation}
\left[\boldsymbol{\nabla}^{2}
+\frac{k^{2}R^{4}}{|\boldsymbol{r}-\boldsymbol{a}|^{4}}
n^{2}\left(\frac{R^{2}}{|\boldsymbol{r}-\boldsymbol{a}|^{2}}
(\boldsymbol{r}-\boldsymbol{a})+\boldsymbol{b}\right)\right]
\frac{1}{|\boldsymbol{r}-\boldsymbol{a}|^{N-2}}
\Psi\left(\frac{R^{2}}{|\boldsymbol{r}-\boldsymbol{a}|^{2}}
(\boldsymbol{r}-\boldsymbol{a})+\boldsymbol{b}\right)=0.
\label{2.2}
\end{equation}
\end{lemma}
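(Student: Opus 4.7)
The plan is to establish (\ref{2.2}) in two stages: first reduce to the centred case $\boldsymbol{a}=\boldsymbol{b}=\boldsymbol{0}$, and then prove a Kelvin-type identity for the Laplacian.

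For the reduction I would exploit translation invariance of $\boldsymbol{\nabla}^{2}$. Setting $\boldsymbol{s}=\boldsymbol{r}-\boldsymbol{a}$ and $\Phi(\boldsymbol{s}')=\Psi(\boldsymbol{s}'+\boldsymbol{b})$, equation (\ref{2.1}) becomes $[\boldsymbol{\nabla}_{s'}^{2}+k^{2}\tilde{n}^{2}(\boldsymbol{s}')]\Phi(\boldsymbol{s}')=0$ with $\tilde{n}(\boldsymbol{s}')=n(\boldsymbol{s}'+\boldsymbol{b})$. It then suffices to prove the statement for $\Phi$ and $\tilde{n}$ with $\boldsymbol{a}=\boldsymbol{b}=\boldsymbol{0}$; restoring the shifts (and using $\boldsymbol{\nabla}_{s}=\boldsymbol{\nabla}_{r}$) recovers (\ref{2.2}) in full generality.

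The core step is the Kelvin-type identity: writing $\boldsymbol{s}'=R^{2}\boldsymbol{s}/|\boldsymbol{s}|^{2}$ and $\tilde{\Phi}(\boldsymbol{s})=|\boldsymbol{s}|^{-(N-2)}\Phi(\boldsymbol{s}')$, I would verify by direct chain-rule computation that
\[
\boldsymbol{\nabla}_{s}^{2}\tilde{\Phi}(\boldsymbol{s})
=\frac{R^{4}}{|\boldsymbol{s}|^{N+2}}\,(\boldsymbol{\nabla}_{s'}^{2}\Phi)(\boldsymbol{s}').
\]
Granted this, substituting $(\boldsymbol{\nabla}_{s'}^{2}\Phi)(\boldsymbol{s}')=-k^{2}\tilde{n}^{2}(\boldsymbol{s}')\Phi(\boldsymbol{s}')$ and using $\Phi(\boldsymbol{s}')=|\boldsymbol{s}|^{N-2}\tilde{\Phi}(\boldsymbol{s})$ to eliminate $\Phi(\boldsymbol{s}')$ produces the PDE for $\tilde{\Phi}$ asserted in (\ref{2.2}).

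Proving the Kelvin-type identity is where the real work lies. I would write $\tilde{\Phi}=f\,\Phi(\boldsymbol{s}')$ with $f(\boldsymbol{s})=|\boldsymbol{s}|^{-(N-2)}$ and expand $\boldsymbol{\nabla}_{s}^{2}\tilde{\Phi}=(\boldsymbol{\nabla}^{2}f)\Phi(\boldsymbol{s}')+2\boldsymbol{\nabla}f\cdot\boldsymbol{\nabla}_{s}\Phi(\boldsymbol{s}')+f\,\boldsymbol{\nabla}_{s}^{2}\Phi(\boldsymbol{s}')$. The first term vanishes because $|\boldsymbol{s}|^{2-N}$ is harmonic on $\mathbb{R}^{N}\setminus\{\boldsymbol{0}\}$. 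The other two are handled via the chain rule, using the key algebraic facts $\partial s'_{j}/\partial s_{i}=(R^{2}/|\boldsymbol{s}|^{2})M_{ij}$ with $M_{ij}=\delta_{ij}-2\hat{s}_{i}\hat{s}_{j}$, together with the orthogonality $M_{ij}M_{ik}=\delta_{jk}$ and the radial identity $s_{i}M_{ij}=-s_{j}$. The main obstacle is algebraic bookkeeping: the mixed gradient term $2\boldsymbol{\nabla}f\cdot\boldsymbol{\nabla}_{s}\Phi(\boldsymbol{s}')$ produces a contribution proportional to $(N-2)s_{j}(\partial^{(s')}_{j}\Phi)$ which must cancel an equal and opposite contribution arising inside $f\,\boldsymbol{\nabla}_{s}^{2}\Phi(\boldsymbol{s}')$ when $\boldsymbol{\nabla}_{s}$ is commuted through the conformal Jacobian; tracking the $N$-dependence in each piece shows the cancellation is exact, leaving only $R^{4}|\boldsymbol{s}|^{-(N+2)}\boldsymbol{\nabla}_{s'}^{2}\Phi$ as required.
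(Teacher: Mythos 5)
Your proof is correct, and its engine is the same as the paper's --- a Kelvin-type identity for the Laplacian whose validity ultimately rests on the harmonicity of $|\boldsymbol{x}|^{2-N}$ away from the origin --- but the route to that identity is genuinely different. The paper argues geometrically: the inversion rescales the line element conformally, $(\mathrm{d}\boldsymbol{r})^{2}\mapsto(R^{4}/|\boldsymbol{r}-\boldsymbol{a}|^{4})(\mathrm{d}\boldsymbol{r})^{2}$, so the transformed Laplacian is the Laplace--Beltrami operator $\lambda^{-N}\boldsymbol{\nabla}\cdot(\lambda^{N-2}\boldsymbol{\nabla})$ of the conformal metric, which an operator identity then rewrites as $|\boldsymbol{r}-\boldsymbol{a}|^{2-N}$ conjugating the flat Laplacian, up to a term killed by harmonicity. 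You instead verify $\boldsymbol{\nabla}_{s}^{2}\bigl[\,|\boldsymbol{s}|^{2-N}\Phi(R^{2}\boldsymbol{s}/|\boldsymbol{s}|^{2})\bigr]=R^{4}|\boldsymbol{s}|^{-N-2}(\boldsymbol{\nabla}^{2}\Phi)(R^{2}\boldsymbol{s}/|\boldsymbol{s}|^{2})$ by direct chain rule, and your bookkeeping checks out: with $A_{ij}=\partial s'_{j}/\partial s_{i}=(R^{2}/|\boldsymbol{s}|^{2})(\delta_{ij}-2\hat{s}_{i}\hat{s}_{j})$ one gets $A_{ij}A_{ik}=(R^{4}/|\boldsymbol{s}|^{4})\delta_{jk}$ and $\partial_{i}A_{ij}=-2(N-2)R^{2}s_{j}/|\boldsymbol{s}|^{4}$, and the latter contribution cancels exactly against $2\boldsymbol{\nabla}f\cdot\boldsymbol{\nabla}_{s}\Phi(\boldsymbol{s}')=+2(N-2)R^{2}|\boldsymbol{s}|^{-N-2}s_{j}(\partial'_{j}\Phi)$, precisely the cancellation you describe. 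Your preliminary reduction to $\boldsymbol{a}=\boldsymbol{b}=\boldsymbol{0}$ by translation invariance is sound and costs nothing; the paper simply carries $\boldsymbol{a}$ and $\boldsymbol{b}$ through. The paper's route buys brevity and makes visible the conformal-invariance structure that motivates the whole construction; yours buys a self-contained elementary verification that needs no transformation law for the Laplace--Beltrami operator. Both arguments are, of course, valid only away from the inversion center $\boldsymbol{r}=\boldsymbol{a}$, which is all the lemma is used for.
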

\begin{proof}
The transformation
\begin{equation}
\boldsymbol{r}\mapsto\frac{R^{2}}{|\boldsymbol{r}-\boldsymbol{a}|^{2}}
(\boldsymbol{r}-\boldsymbol{a})+\boldsymbol{b}
\label{2.3}
\end{equation}
results in the following alteration of the infinitesimal line
element:
\begin{equation}
(\mathrm{d}\boldsymbol{r})^{2}
\mapsto\frac{R^{4}}{|\boldsymbol{r}-\boldsymbol{a}|^{4}}
(\mathrm{d}\boldsymbol{r})^{2}.
\label{2.4}
\end{equation}
Hence, substitution (\ref{2.3}) implies the following transformation
of the $N$-dimensional Laplace operator:
\begin{equation}
\boldsymbol{\nabla}^{2}
\mapsto\frac{|\boldsymbol{r}-\boldsymbol{a}|^{2N}}{R^{2N}}
\boldsymbol{\nabla}\cdot\left(\frac{R^{2(N-2)}}
{|\boldsymbol{r}-\boldsymbol{a}|^{2(N-2)}}\boldsymbol{\nabla}\right).
\label{2.5}
\end{equation}
Using the easily provable differential identity
\begin{equation}
\boldsymbol{\nabla}\cdot\left(\frac{R^{2(N-2)}}
{|\boldsymbol{r}-\boldsymbol{a}|^{2(N-2)}}\boldsymbol{\nabla}\right)
=\frac{R^{2(N-2)}}{|\boldsymbol{r}-\boldsymbol{a}|^{N-2}}
\boldsymbol{\nabla}^{2}
\frac{1}{|\boldsymbol{r}-\boldsymbol{a}|^{N-2}}
-\frac{R^{2(N-2)}}{|\boldsymbol{r}-\boldsymbol{a}|^{N-2}}
\left(\boldsymbol{\nabla}^{2}
\frac{1}{|\boldsymbol{r}-\boldsymbol{a}|^{N-2}}\right),
\label{2.6}
\end{equation}
and exploiting the fact that the function
$1/|\boldsymbol{r}-\boldsymbol{a}|^{N-2}$ is harmonic in
$\mathbb{R}^{N}$, we see that transformation (\ref{2.3}) changes Eq.\
(\ref{2.1}) into
\begin{equation}
\left[\frac{|\boldsymbol{r}-\boldsymbol{a}|^{N+2}}{R^{4}}
\boldsymbol{\nabla}^{2}
\frac{1}{|\boldsymbol{r}-\boldsymbol{a}|^{N-2}}
+k^{2}n^{2}\left(\frac{R^{2}}{|\boldsymbol{r}-\boldsymbol{a}|^{2}}
(\boldsymbol{r}-\boldsymbol{a})+\boldsymbol{b}\right)\right]
\Psi\left(\frac{R^{2}}{|\boldsymbol{r}-\boldsymbol{a}|^{2}}
(\boldsymbol{r}-\boldsymbol{a})+\boldsymbol{b}\right)=0,
\label{2.7}
\end{equation}
which immediately leads to Eq.\ (\ref{2.2}).
\end{proof}

Actually, the above lemma offers a bit more than necessary for the
purposes of this paper. In view of our needs, in what follows we
shall restrict ourselves to the special case when the vectors
$\boldsymbol{a}$ and $\boldsymbol{b}$ are equal. It is then evident
that the resulting transformation
\begin{equation}
\boldsymbol{r}\mapsto\frac{R^{2}}{|\boldsymbol{r}-\boldsymbol{a}|^{2}}
(\boldsymbol{r}-\boldsymbol{a})+\boldsymbol{a}
\label{2.8}
\end{equation}
is the geometric inversion in the hypersphere of radius $R$ centered
at the point with the radius vector $\boldsymbol{r}=\boldsymbol{a}$.
(In fact, if $\boldsymbol{b}=\boldsymbol{a}$ and $k=0$, the lemma is
simply an $N$-dimensional extension of the well-known Kelvin
inversion theorem for harmonic functions \cite{Thom72}.)

Now we turn to the fish-eye problem. Application of the following
special case of inversion (\ref{2.8}):
\begin{equation}
\boldsymbol{r}
\mapsto\frac{\rho^{2}+a^{2}}{|\boldsymbol{r}-\boldsymbol{a}|^{2}}
(\boldsymbol{r}-\boldsymbol{a})+\boldsymbol{a}
\quad \Rightarrow \quad
r\mapsto\frac{a}{|\boldsymbol{r}-\boldsymbol{a}|}
\left|\boldsymbol{r}+\boldsymbol{a}\frac{\rho^{2}}{a^{2}}\right|
\label{2.9}
\end{equation}
to the fish-eye refraction index (\ref{1.1}) gives
\begin{equation}
n_{\mathrm{fe}}\left(\frac{\rho^{2}+a^{2}}
{|\boldsymbol{r}-\boldsymbol{a}|^{2}}(\boldsymbol{r}-\boldsymbol{a})
+\boldsymbol{a}\right)=\frac{|\boldsymbol{r}-\boldsymbol{a}|^{2}}
{\rho^{2}+a^{2}}\frac{2n_{0}\rho^{2}}{r^{2}+\rho^{2}}
=\frac{|\boldsymbol{r}-\boldsymbol{a}|^{2}}{\rho^{2}+a^{2}}
n_{\mathrm{fe}}(\boldsymbol{r}).
\label{2.10}
\end{equation}
Combining this property of the index
$n_{\mathrm{fe}}(\boldsymbol{r})$ with the result stated in the
lemma, we arrive at
\begin{corollary}
If the function $\Psi(\boldsymbol{r})$ \emph{(}with 
$\boldsymbol{r}\in\mathbb{R}^{N}$, $N\geqslant2$\emph{)} solves the
fish-eye equation
\begin{equation}
\left[\boldsymbol{\nabla}^{2}
+\frac{4n_{0}^{2}k^{2}\rho^{4}}{(r^{2}+\rho^{2})^{2}}\right]
\Psi(\boldsymbol{r})=0
\qquad (\rho>0),
\label{2.11}
\end{equation}
then for arbitrary $\boldsymbol{a}\in\mathbb{R}^{N}$ the function
\begin{equation}
\hat{\mathcal{I}}(\boldsymbol{a},\sqrt{\rho^{2}+a^{2}})
{\Psi}(\boldsymbol{r})
\equiv\frac{1}{|\boldsymbol{r}-\boldsymbol{a}|^{N-2}}
\Psi\left(\frac{\rho^{2}+a^{2}}{|\boldsymbol{r}-\boldsymbol{a}|^{2}}
(\boldsymbol{r}-\boldsymbol{a})+\boldsymbol{a}\right)
\label{2.12}
\end{equation}
also solves this equation, i.e., it holds that
\begin{equation}
\left[\boldsymbol{\nabla}^{2}
+\frac{4n_{0}^{2}k^{2}\rho^{4}}{(r^{2}+\rho^{2})^{2}}\right]
\frac{1}{|\boldsymbol{r}-\boldsymbol{a}|^{N-2}}
\Psi\left(\frac{\rho^{2}+a^{2}}{|\boldsymbol{r}-\boldsymbol{a}|^{2}}
(\boldsymbol{r}-\boldsymbol{a})+\boldsymbol{a}\right)=0
\qquad (\rho>0).
\label{2.13}
\end{equation}
\end{corollary}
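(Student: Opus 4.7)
The plan is to deduce the corollary by a direct substitution, combining Lemma 1 (specialised to $\boldsymbol{b}=\boldsymbol{a}$ and $R=\sqrt{\rho^{2}+a^{2}}$) with the index-transformation identity (\ref{2.10}). The fish-eye equation (\ref{2.11}) is of the form (\ref{2.1}) with $n^{2}(\boldsymbol{r})=4n_{0}^{2}\rho^{4}/(r^{2}+\rho^{2})^{2}$, so Lemma 1 applies and immediately yields an equation of type (\ref{2.2}) satisfied by the candidate function (\ref{2.12}). The substance of the proof is then only to show that the coefficient multiplying $\boldsymbol{\nabla}^{2}$'s sibling in that transformed equation collapses back to the original $4n_{0}^{2}k^{2}\rho^{4}/(r^{2}+\rho^{2})^{2}$.

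Concretely, I would first set $\boldsymbol{b}=\boldsymbol{a}$ and $R^{2}=\rho^{2}+a^{2}$ in Lemma 1, producing
\[
\left[\boldsymbol{\nabla}^{2}+\frac{k^{2}(\rho^{2}+a^{2})^{2}}{|\boldsymbol{r}-\boldsymbol{a}|^{4}}\,n_{\mathrm{fe}}^{2}\!\left(\frac{\rho^{2}+a^{2}}{|\boldsymbol{r}-\boldsymbol{a}|^{2}}(\boldsymbol{r}-\boldsymbol{a})+\boldsymbol{a}\right)\right]\hat{\mathcal{I}}(\boldsymbol{a},\sqrt{\rho^{2}+a^{2}})\Psi(\boldsymbol{r})=0.
\]
Next I would insert the identity (\ref{2.10}), squared, namely
\[
n_{\mathrm{fe}}^{2}\!\left(\frac{\rho^{2}+a^{2}}{|\boldsymbol{r}-\boldsymbol{a}|^{2}}(\boldsymbol{r}-\boldsymbol{a})+\boldsymbol{a}\right)=\frac{|\boldsymbol{r}-\boldsymbol{a}|^{4}}{(\rho^{2}+a^{2})^{2}}\,n_{\mathrm{fe}}^{2}(\boldsymbol{r}),
\]
and observe that the factor $|\boldsymbol{r}-\boldsymbol{a}|^{4}/(\rho^{2}+a^{2})^{2}$ exactly cancels the prefactor $(\rho^{2}+a^{2})^{2}/|\boldsymbol{r}-\boldsymbol{a}|^{4}$ coming from Lemma 1. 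What remains is $k^{2}n_{\mathrm{fe}}^{2}(\boldsymbol{r})=4n_{0}^{2}k^{2}\rho^{4}/(r^{2}+\rho^{2})^{2}$, which is precisely the coefficient in (\ref{2.11}). This is the content of (\ref{2.13}).

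There is no real obstacle here; the proof is purely a matter of inserting the two ingredients in the right order and tracking the powers of $|\boldsymbol{r}-\boldsymbol{a}|$ and $\rho^{2}+a^{2}$. The only point worth emphasising is why the choice $R^{2}=\rho^{2}+a^{2}$ is the \emph{unique} one that makes the fish-eye equation invariant: it is dictated by demanding that the $|\boldsymbol{r}-\boldsymbol{a}|$-dependent prefactors cancel, so that the inverted potential is again of fish-eye form with the same parameter $\rho$. Accordingly I would end the proof by remarking that the map $\Psi\mapsto\hat{\mathcal{I}}(\boldsymbol{a},\sqrt{\rho^{2}+a^{2}})\Psi$ defines a discrete symmetry of (\ref{2.11}) for every choice of the inversion centre $\boldsymbol{a}\in\mathbb{R}^{N}$, a fact to be exploited in Section \ref{III} for the construction of the Green's function.
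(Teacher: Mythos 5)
Your proposal is correct and follows exactly the paper's route: the corollary is obtained by specialising Lemma 1 to $\boldsymbol{b}=\boldsymbol{a}$, $R^{2}=\rho^{2}+a^{2}$ and inserting the squared index identity (\ref{2.10}), whereupon the factors $|\boldsymbol{r}-\boldsymbol{a}|^{4}/(\rho^{2}+a^{2})^{2}$ cancel and the fish-eye coefficient is reproduced. The closing remark on why $R^{2}=\rho^{2}+a^{2}$ is forced is a harmless (and accurate) addition beyond what the paper states.
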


In the particular case of $N=3$, the above result was established by
Demkov and Ostrovsky \cite{Demk71} (see also \cite{Demk84}).
\section{The Green's function for the fish-eye problem}
\label{III}
\setcounter{equation}{0}
We are now ready to construct the $N$-dimensional fish-eye Green's
function $G_{\nu}(\boldsymbol{r},\boldsymbol{r}')$. According to the
general theory of Green's functions for elliptic partial differential
operators, $G_{\nu}(\boldsymbol{r},\boldsymbol{r}')$ is a
single-valued solution to the fish-eye equation
\begin{equation}
\left[\boldsymbol{\nabla}^{2}
+\frac{4n_{0}^{2}k^{2}\rho^{4}}{(r^{2}+\rho^{2})^{2}}\right]
G_{\nu}(\boldsymbol{r},\boldsymbol{r}')=0
\qquad (\rho>0)
\label{3.1}
\end{equation}
everywhere in $\mathbb{R}^{N}$ except for the source point
$\boldsymbol{r}=\boldsymbol{r}'$, where it diverges according to
\begin{equation}
G_{\nu}(\boldsymbol{r},\boldsymbol{r}')
\stackrel{\boldsymbol{r}\to\boldsymbol{r}'}
{\longrightarrow}\frac{1}{2\pi}\ln|\boldsymbol{r}-\boldsymbol{r}'|
\qquad (N=2)
\label{3.2}
\end{equation}
or
\begin{equation}
G_{\nu}(\boldsymbol{r},\boldsymbol{r}')
\stackrel{\boldsymbol{r}\to\boldsymbol{r}'}{\longrightarrow}
-\frac{1}{(N-2)S_{N-1}|\boldsymbol{r}-\boldsymbol{r}'|^{N-2}}
\qquad (N\geqslant3).
\label{3.3}
\end{equation}
In the last equation 
\begin{equation}
S_{N-1}=\frac{2\pi^{N/2}}{\Gamma\left(\frac{N}{2}\right)}
\label{3.4}
\end{equation}
is a surface area of a unit $(N-1)$-dimensional sphere
$\mathbb{S}^{N-1}$ embedded in $\mathbb{R}^{N}$. At infinity, we
require
\begin{equation}
G_{\nu}(\boldsymbol{r},\boldsymbol{r}')
\stackrel{r\to\infty}{\longrightarrow}
\frac{C_{\nu}(\boldsymbol{r}')}{r^{N-2}}
\label{3.5}
\end{equation}
(the non-zero constant $C_{\nu}(\boldsymbol{r}')$ appearing in
condition (\ref{3.5}) will be determined later). The parameter $\nu$
is defined as
\begin{equation}
\nu=\frac{-1+\sqrt{1+4n_{0}^{2}k^{2}\rho^{2}}}{2}
\qquad (\textrm{$\nu\to0$ for $n_{0}\to0$})
\label{3.6}
\end{equation}
and reasons for its introduction will become clear shortly. In what
follows, we admit that the product $n_{0}^{2}k^{2}$, hence also
$\nu$, may be complex.

At first, consider the case when the source is located at the center
of symmetry of the medium. Evidently, the corresponding Green's
function $G_{\nu}(\boldsymbol{r},\boldsymbol{0})$ must be spherically
symmetric, being a function of $r=|\boldsymbol{r}|$ only. Hence, it
follows that $G_{\nu}(\boldsymbol{r},\boldsymbol{0})$ obeys
\begin{equation}
\left[\frac{\partial^{2}}{\partial r^{2}}
+\frac{N-1}{r}\frac{\partial}{\partial r}
+\frac{4\nu(\nu+1)\rho^{2}}{(r^{2}+\rho^{2})^{2}}\right]
G_{\nu}(\boldsymbol{r},\boldsymbol{0})=0
\label{3.7}
\end{equation}
except for the point $\boldsymbol{r}=\boldsymbol{0}$, where it
behaves according to Eqs.\ (\ref{3.2}) or (\ref{3.3}) with
$\boldsymbol{r}'=\boldsymbol{0}$. The substitution
\begin{equation}
G_{\nu}(\boldsymbol{r},\boldsymbol{0})
=\left(\frac{\rho}{r}\right)^{N/2-1}
F\left(\frac{r^{2}-\rho^{2}}{r^{2}+\rho^{2}}\right)
\label{3.8}
\end{equation}
leads to the following differential equation for the function $F$:
\begin{equation}
\left[(1-x^{2})\frac{\mathrm{d}^{2}}{\mathrm{d}x^{2}}
-2x\frac{\mathrm{d}}{\mathrm{d}x}+\nu(\nu+1)
-\frac{\mu^{2}}{1-x^{2}}\right]F(x)=0,
\label{3.9}
\end{equation}
where
\begin{equation}
x=\frac{r^{2}-\rho^{2}}{r^{2}+\rho^{2}}
\qquad (-1\leqslant x\leqslant1)
\label{3.10}
\end{equation}
and
\begin{equation}
\mu=\frac{N-2}{2}.
\label{3.11}
\end{equation}

Equation (\ref{3.9}) is the associated Legendre equation. Its general
solution, written in the form most suitable for the present purposes,
is
\begin{equation}
F(x)=AP_{\nu}^{-\mu}(x)+BR_{\nu}^{\mu}(x),
\label{3.12}
\end{equation}
with $A$, $B$ being arbitrary constants and with
\begin{equation}
R_{\nu}^{\mu}(x)=Q_{\nu}^{\mu}(x)
+\frac{\mathrm{i}\pi}{2}P_{\nu}^{\mu}(x)
=\frac{\pi}{2\sin(\pi\mu)}
\left[\mathrm{e}^{\mathrm{i}\pi\mu}P_{\nu}^{\mu}(x)
-\frac{\Gamma(\nu+\mu+1)}{\Gamma(\nu-\mu+1)}P_{\nu}^{-\mu}(x)\right].
\label{3.13}
\end{equation}
Here, $P_{\nu}^{\mu}(x)$ and $Q_{\nu}^{\mu}(x)$ are the associated
Legendre functions (on the cut $-1\leqslant x\leqslant1$) of the
first and second kinds, respectively (occasionally,
$R_{\nu}^{\mu}(x)$ is called the associated Legendre function of the
third kind). The general character of solution (\ref{3.12}) follows
from the fact that the Wronskian of $P_{\nu}^{-\mu}(x)$ and
$R_{\nu}^{\mu}(x)$ is
\begin{equation}
W[P_{\nu}^{-\mu}(x),R_{\nu}^{\mu}(x)]
=\frac{\exp(\mathrm{i}\pi\mu)}{1-x^{2}},
\label{3.14}
\end{equation}
i.e., it vanishes nowhere. Henceforth, we shall adopt the standard
convention and shall write $P_{\nu}(x)$ and $R_{\nu}(x)$ in place of
$P_{\nu}^{0}(x)$ and $R_{\nu}^{0}(x)$.

With the general solution to Eq.\ (\ref{3.9}) in hand, we see that
the Green's function $G_{\nu}(\boldsymbol{r},\boldsymbol{0})$ is of the
form
\begin{equation}
G_{\nu}(\boldsymbol{r},\boldsymbol{0})
=A\left(\frac{\rho}{r}\right)^{N/2-1}
P_{\nu}^{-N/2+1}\left(\frac{r^{2}-\rho^{2}}{r^{2}+\rho^{2}}\right)
+B\left(\frac{\rho}{r}\right)^{N/2-1}
R_{\nu}^{N/2-1}\left(\frac{r^{2}-\rho^{2}}{r^{2}+\rho^{2}}\right),
\label{3.15}
\end{equation}
$\nu$ being defined in Eq.\ (\ref{3.6}). We shall fix values of the
constants $A$ and $B$ in two steps. At first, we investigate the
asymptotics of the expression in Eq.\ (\ref{3.15}) as $r\to\infty$.
Using the known formulas \cite[p.~196]{Magn66}
\begin{equation}
P_{\nu}^{-\mu}(x)\stackrel{x\to1-0}{\longrightarrow}
\frac{1}{\Gamma(\mu+1)}\left(\frac{1-x}{2}\right)^{\mu/2}
\qquad (\mu\neq-1,-2,\ldots),
\label{3.16}
\end{equation}
\begin{equation}
R_{\nu}(x)\stackrel{x\to1-0}{\longrightarrow}
-\frac{1}{2}\ln(1-x),
\label{3.17}
\end{equation}
\begin{equation}
R_{\nu}^{\mu}(x)\stackrel{x\to1-0}{\longrightarrow}
\frac{1}{2}\mathrm{e}^{\mathrm{i}\pi\mu}\Gamma(\mu)
\left(\frac{1-x}{2}\right)^{-\mu/2}
\qquad (\Real\mu>0),
\label{3.18}
\end{equation}
we see that the constraint (\ref{3.5}) is fulfilled iff $B=0$, and
consequently
\begin{equation}
G_{\nu}(\boldsymbol{r},\boldsymbol{0})
=A\left(\frac{\rho}{r}\right)^{N/2-1}
P_{\nu}^{-N/2+1}\left(\frac{r^{2}-\rho^{2}}{r^{2}+\rho^{2}}\right).
\label{3.19}
\end{equation}
In the second step, we investigate the asymptotics of the right-hand
side of Eq.\ (\ref{3.19}) as $\boldsymbol{r}\to\boldsymbol{0}$.
Exploiting the formulas \cite[p.~197]{Magn66}
\begin{equation}
P_{\nu}(x)\stackrel{x\to-1+0}{\longrightarrow}
\frac{\sin(\pi\nu)}{\pi}\ln(1+x)
\label{3.20}
\end{equation}
and
\begin{equation}
P_{\nu}^{-\mu}(x)\stackrel{x\to-1+0}{\longrightarrow}
\frac{\Gamma(\mu)}{\Gamma(\nu+\mu+1)\Gamma(-\nu+\mu)}
\left(\frac{1+x}{2}\right)^{-\mu/2}
\qquad (\Real\mu>0),
\label{3.21}
\end{equation}
we find that the constraints (\ref{3.2}) and (\ref{3.3}) will be
satisfied iff
\begin{equation}
G_{\nu}(\boldsymbol{r},\boldsymbol{0})=\frac{1}{4\sin(\pi\nu)}
P_{\nu}\left(\frac{r^{2}-\rho^{2}}{r^{2}+\rho^{2}}\right)
\qquad (N=2)
\label{3.22}
\end{equation}
and
\begin{equation}
G_{\nu}(\boldsymbol{r},\boldsymbol{0})
=-\frac{\Gamma\left(\frac{N}{2}+\nu\right)
\Gamma\left(\frac{N}{2}-\nu-1\right)}{4\pi^{N/2}}
\frac{\displaystyle P_{\nu}^{-N/2+1}
\left(\frac{r^{2}-\rho^{2}}{r^{2}+\rho^{2}}\right)}
{(r\rho)^{N/2-1}}
\qquad (N\geqslant3),
\label{3.23}
\end{equation}
respectively. Since it holds that
\begin{equation}
\sin(\pi\nu)=-\frac{\pi}{\Gamma(\nu+1)\Gamma(-\nu)},
\label{3.24}
\end{equation}
Eqs.\ (\ref{3.22}) and (\ref{3.23}) may be collected into a single
formula
\begin{equation}
G_{\nu}(\boldsymbol{r},\boldsymbol{0})
=-\frac{\Gamma\left(\frac{N}{2}+\nu\right)
\Gamma\left(\frac{N}{2}-\nu-1\right)}{4\pi^{N/2}}
\frac{\displaystyle P_{\nu}^{-N/2+1}
\left(\frac{r^{2}-\rho^{2}}{r^{2}+\rho^{2}}\right)}
{(r\rho)^{N/2-1}}.
\label{3.25}
\end{equation}
From this, using relation (\ref{3.16}), we deduce that for
$\boldsymbol{r}'=\boldsymbol{0}$ the constant $C_{\nu}$ in the
asymptotic relation (\ref{3.5}) is
\begin{equation}
C_{\nu}(\boldsymbol{0})=-\frac{\Gamma\left(\frac{N}{2}+\nu\right)
\Gamma\left(\frac{N}{2}-\nu-1\right)}
{4\pi^{N/2}\Gamma\left(\frac{N}{2}\right)}.
\label{3.26} 
\end{equation}

To find the Green's function for an arbitrary location of the source
point $\boldsymbol{r}'$, we consider the transformed function
\begin{equation}
\hat{\mathcal{I}}(\boldsymbol{a},\sqrt{\rho^{2}+a^{2}})
G_{\nu}(\boldsymbol{r},\boldsymbol{0})
=\frac{1}{|\boldsymbol{r}-\boldsymbol{a}|^{N-2}}
G_{\nu}\left(\frac{\rho^{2}+a^{2}}{|\boldsymbol{r}-\boldsymbol{a}|^{2}}
(\boldsymbol{r}-\boldsymbol{a})+\boldsymbol{a},\boldsymbol{0}\right),
\label{3.27}
\end{equation}
with the center of the inversion sphere (of radius
$\sqrt{\rho^{2}+a^{2}}$) located at the point
\begin{equation}
\boldsymbol{a}=-\boldsymbol{r}'\frac{\rho^{2}}{r^{\prime\,2}}.
\label{3.28}
\end{equation}
Using Eq.\ (\ref{3.25}), the explicit form of this transformed
function, denoted hereafter as 
$g_{\nu}(\boldsymbol{r},\boldsymbol{r}')$, is seen to be
\begin{equation}
g_{\nu}(\boldsymbol{r},\boldsymbol{r}')
=-\,\frac{\Gamma\left(\frac{N}{2}+\nu\right)
\Gamma\left(\frac{N}{2}-\nu-1\right)}{4\pi^{N/2}}
\left(\frac{r'}{\rho^{3}}\right)^{N/2-1}
\frac{\displaystyle P_{\nu}^{-N/2+1}
\left(-1+\frac{\displaystyle2\rho^{2}
(\boldsymbol{r}-\boldsymbol{r}')^{2}}
{(r^{2}+\rho^{2})(r^{\prime\,2}+\rho^{2})}\right)}
{\displaystyle|\boldsymbol{r}-\boldsymbol{r}'|^{N/2-1}
\left|\boldsymbol{r}+\boldsymbol{r}'\frac{\rho^{2}}{r^{\prime\,2}}
\right|^{N/2-1}}.
\label{3.29}
\end{equation}
In view of the results of Section \ref{II}, we know for sure that the
function in Eq.\ (\ref{3.29}) solves the fish-eye equation, except,
possibly, for some isolated points. It is evident that the points at
which the behavior of $g_{\nu}(\boldsymbol{r},\boldsymbol{r}')$
should be investigated are the two finite points
$\boldsymbol{r}=\boldsymbol{r}'$ and
$\boldsymbol{r}=-\boldsymbol{r}'\rho^{2}/r^{\prime\,2}$, and also the
point at infinity. Using the asymptotic relations (\ref{3.20}) and
(\ref{3.21}), we derive
\begin{equation}
g_{\nu}(\boldsymbol{r},\boldsymbol{r}')
\stackrel{\boldsymbol{r}\to\boldsymbol{r}'}{\longrightarrow}
\frac{1}{2\pi}\ln|\boldsymbol{r}-\boldsymbol{r}'|
\qquad (N=2)
\label{3.30}
\end{equation}
and
\begin{equation}
g_{\nu}(\boldsymbol{r},\boldsymbol{r}')
\stackrel{\boldsymbol{r}\to\boldsymbol{r}'}{\longrightarrow}
-\frac{1}{(N-2)S_{N-1}|\boldsymbol{r}-\boldsymbol{r}'|^{N-2}}
\left(\frac{r'}{\rho^{2}}\right)^{N-2}
\qquad (N\geqslant3),
\label{3.31}
\end{equation}
i.e., the `inverted' function diverges for
$\boldsymbol{r}\to\boldsymbol{r}'$ in the same manner (save for the
factor $(r'/\rho^{2})^{N-2}$ when $N\geqslant3$) as the Green's
function $G_{\nu}(\boldsymbol{r},\boldsymbol{r}')$ (cf.\ Eqs.\
(\ref{3.2}) and (\ref{3.3})). Furthermore, it is seen that for
$r\to\infty$ the function (\ref{3.29}) decays asymptotically as
\begin{equation}
g_{\nu}(\boldsymbol{r},\boldsymbol{r}')
\stackrel{r\to\infty}{\longrightarrow}
-\frac{\Gamma\left(\frac{N}{2}+\nu\right)
\Gamma\left(\frac{N}{2}-\nu-1\right)}{4\pi^{N/2}}
\left(\frac{r'}{\rho^{3}}\right)^{N/2-1}
\frac{\displaystyle P_{\nu}^{-N/2+1}
\left(\frac{\rho^{2}-r^{\prime\,2}}{\rho^{2}+r^{\prime\,2}}\right)}
{r^{N-2}},
\label{3.32}
\end{equation}
i.e., in the same functional manner with $r$ as prescribed for
$G_{\nu}(\boldsymbol{r},\boldsymbol{r}')$ in Eq.\ (\ref{3.5}).
Finally, with the help of the identity
\begin{equation}
-1+\frac{2\rho^{2}(\boldsymbol{r}-\boldsymbol{r}')^{2}}
{(r^{2}+\rho^{2})(r^{\prime\,2}+\rho^{2})}
=1-\frac{\displaystyle2r^{\prime\,2}
\left(\boldsymbol{r}+\boldsymbol{r}'
\frac{\rho^{2}}{r^{\prime\,2}}\right)^{2}}
{(r^{2}+\rho^{2})(r^{\prime\,2}+\rho^{2})}
\label{3.33}
\end{equation}
and the asymptotic relation (\ref{3.16}), we find that for
$\boldsymbol{r}\to-\boldsymbol{r}'\rho^{2}/r^{\prime\,2}$ the
function (\ref{3.29}) remains finite:
\begin{equation}
g_{\nu}(\boldsymbol{r},\boldsymbol{r}')
\stackrel{\boldsymbol{r}\to-\boldsymbol{r}'\rho^{2}/r^{\prime\,2}}
{\longrightarrow}-\frac{\Gamma\left(\frac{N}{2}+\nu\right)
\Gamma\left(\frac{N}{2}-\nu-1\right)}
{4\pi^{N/2}\Gamma\left(\frac{N}{2}\right)}
\left(\frac{r'}{\rho}\right)^{2(N-2)}
\frac{1}{(r^{\prime\,2}+\rho^{2})^{N-2}}.
\label{3.34}
\end{equation}
Thus, we see that the function $(\rho^{2}/r')^{N-2}
g_{\nu}(\boldsymbol{r},\boldsymbol{r}')$ satisfies all conditions
imposed on $G_{\nu}(\boldsymbol{r},\boldsymbol{r}')$ in Eqs.\
(\ref{3.1})--(\ref{3.5}). Hence, we conclude that the closed form of
the $N$-dimensional fish-eye Green's function for an arbitrary location
of the source point $\boldsymbol{r}'$ is
\begin{equation}
G_{\nu}(\boldsymbol{r},\boldsymbol{r}')
=-\frac{\Gamma\left(\frac{N}{2}+\nu\right)
\Gamma\left(\frac{N}{2}-\nu-1\right)}{4\pi^{N/2}}
\left(\frac{\rho}{r'}\right)^{N/2-1}
\frac{\displaystyle P_{\nu}^{-N/2+1}
\left(-1+\frac{\displaystyle2\rho^{2}
(\boldsymbol{r}-\boldsymbol{r}')^{2}}
{(r^{2}+\rho^{2})(r^{\prime\,2}+\rho^{2})}\right)}
{\displaystyle|\boldsymbol{r}-\boldsymbol{r}'|^{N/2-1}
\left|\boldsymbol{r}+\boldsymbol{r}'\frac{\rho^{2}}{r^{\prime\,2}}
\right|^{N/2-1}}
\label{3.35}
\end{equation}
and that the constant $C_{\nu}(\boldsymbol{r}')$ in the asymptotic
constraint (\ref{3.5}) is
\begin{equation}
C_{\nu}(\boldsymbol{r}')=-\frac{\Gamma\left(\frac{N}{2}+\nu\right)
\Gamma\left(\frac{N}{2}-\nu-1\right)}{4\pi^{N/2}}
\left(\frac{\rho}{r'}\right)^{N/2-1}
P_{\nu}^{-N/2+1}\left(\frac{\rho^{2}-r^{\prime\,2}}
{\rho^{2}+r^{\prime\,2}}\right).
\label{3.36}
\end{equation}

Since the differential operator in Eq.\ (\ref{3.1}) is symmetric with
respect to the scalar product $\big<\chi\big|\phi\big>_{N}
\equiv\int_{\mathbb{R}^{N}}\mathrm{d}^{N}\boldsymbol{r}\:
\chi(\boldsymbol{r})\phi(\boldsymbol{r})$, the fish-eye Green's function
should be symmetric with respect to the interchange of the source and
observation points:
\begin{equation}
G_{\nu}(\boldsymbol{r},\boldsymbol{r}')
=G_{\nu}(\boldsymbol{r}',\boldsymbol{r}).
\label{3.37}
\end{equation}
However, the representation of
$G_{\nu}(\boldsymbol{r},\boldsymbol{r}')$ given in Eq.\ (\ref{3.35})
does not exhibit this property explicitly. To show that nevertheless
relation (\ref{3.37}) is satisfied, we observe that it holds that
\begin{equation}
r'\left|\boldsymbol{r}+\boldsymbol{r}'\frac{\rho^{2}}
{r^{\prime\,2}}\right|
=r\left|\boldsymbol{r}'+\boldsymbol{r}\frac{\rho^{2}}{r^{2}}\right|
=\sqrt{r^{2}r^{\prime\,2}
+2\rho^{2}\boldsymbol{r}\cdot\boldsymbol{r}'+\rho^{4}}.
\label{3.38}
\end{equation}
Consequently, the Green's function (\ref{3.35}) may be alternatively
rewritten in either of the following two manifestly symmetric forms:
\begin{equation}
G_{\nu}(\boldsymbol{r},\boldsymbol{r}')
=-\frac{\Gamma\left(\frac{N}{2}+\nu\right)
\Gamma\left(\frac{N}{2}-\nu-1\right)}{4\pi^{N/2}}
\frac{\displaystyle\rho^{N/2-1}
P_{\nu}^{-N/2+1}\left(-1+\frac{\displaystyle2\rho^{2}
(\boldsymbol{r}-\boldsymbol{r}')^{2}}
{(r^{2}+\rho^{2})(r^{\prime\,2}+\rho^{2})}\right)}
{\displaystyle|\boldsymbol{r}-\boldsymbol{r}'|^{N/2-1}
\left(r^{2}r^{\prime\,2}+2\rho^{2}\boldsymbol{r}\cdot\boldsymbol{r}'
+\rho^{4}\right)^{N/4-1/2}}
\label{3.39}
\end{equation}
or
\begin{eqnarray}
G_{\nu}(\boldsymbol{r},\boldsymbol{r}')
&=& -\,\frac{\Gamma\left(\frac{N}{2}+\nu\right)
\Gamma\left(\frac{N}{2}-\nu-1\right)}{4\pi^{N/2}}
\nonumber \\
&& \times\frac{\displaystyle\rho^{N/2-1}
P_{\nu}^{-N/2+1}\left(-1+\frac{\displaystyle2\rho^{2}
(\boldsymbol{r}-\boldsymbol{r}')^{2}}
{(r^{2}+\rho^{2})(r^{\prime\,2}+\rho^{2})}\right)}
{\displaystyle(rr')^{N/4-1/2}|
\boldsymbol{r}-\boldsymbol{r}'|^{N/2-1}
\left|\boldsymbol{r}+\boldsymbol{r}'\frac{\rho^{2}}{r^{\prime\,2}}
\right|^{N/4-1/2}
\left|\boldsymbol{r}'+\boldsymbol{r}\frac{\rho^{2}}{r^{2}}
\right|^{N/4-1/2}}.
\label{3.40}
\end{eqnarray}
Still another representation of
$G_{\nu}(\boldsymbol{r},\boldsymbol{r}')$ displaying its symmetry is
the one in terms of the Gegenbauer function of the first kind. Using
the known relationship
\begin{equation}
C_{\alpha}^{\lambda}(x)=\frac{\sqrt{\pi}}{2^{\lambda-1/2}}
\frac{\Gamma(\alpha+2\lambda)}{\Gamma(\lambda)\Gamma(\alpha+1)}
(1-x^{2})^{-\lambda/2+1/4}P_{\alpha+\lambda-1/2}^{-\lambda+1/2}(x)
\qquad (-1\leqslant x\leqslant1),
\label{3.41}
\end{equation}
Eq.\ (\ref{3.35}) is transformed into
\begin{equation}
G_{\nu}(\boldsymbol{r},\boldsymbol{r}')
=\frac{2^{N-4}\Gamma\left(\frac{N-1}{2}\right)}
{\pi^{(N-1)/2}\sin\left[\pi\left(\frac{N}{2}-\nu\right)\right]}
\frac{\displaystyle\rho^{N-2}C_{\nu-N/2+1}^{(N-1)/2}
\left(-1+\frac{\displaystyle2\rho^{2}
(\boldsymbol{r}-\boldsymbol{r}')^{2}}
{(r^{2}+\rho^{2})(r^{\prime\,2}+\rho^{2})}\right)}
{(r^{2}+\rho^{2})^{N/2-1}(r^{\prime\,2}+\rho^{2})^{N/2-1}}.
\label{3.42}
\end{equation}

Let us consider some particular cases. For $N=2$, from either of
Eqs.\ (\ref{3.35}), (\ref{3.39}) or (\ref{3.40}), with the aid of
Eq.\ (\ref{3.24}), we find
\begin{equation}
G_{\nu}(\boldsymbol{r},\boldsymbol{r}')
=\frac{1}{4\sin(\pi\nu)}P_{\nu}\left(-1+\frac{\displaystyle2\rho^{2}
(\boldsymbol{r}-\boldsymbol{r}')^{2}}
{(r^{2}+\rho^{2})(r^{\prime\,2}+\rho^{2})}\right)
\qquad (N=2).
\label{3.43}
\end{equation}
Next, it appears that the representations of
$G_{\nu}(\boldsymbol{r},\boldsymbol{r}')$ found above simplify
greatly when $N$ is odd, as then the Legendre function appearing in
Eqs.\ (\ref{3.35}), (\ref{3.39}) and (\ref{3.40}) may be expressed in
terms of trigonometric and inverse trigonometric functions either as
\cite[p.\ 168]{Magn66}
\begin{eqnarray}
P_{\nu}^{-N/2+1}(x) &=& \frac{\left(\frac{N-3}{2}\right)!}
{2^{N/2-2}\sqrt{\pi}}\left(1-x^{2}\right)^{-N/4+1/2}
\sum_{k=0}^{(N-3)/2}(-)^{k}
\frac{\Gamma\left(k+\nu-\frac{N}{2}+2\right)}
{k!\Gamma\left(k+\nu+\frac{3}{2}\right)
\left(\frac{N-3}{2}-k\right)!}
\nonumber \\
&& \times\sin\left[\left(2k+\nu-{\textstyle\frac{N}{2}}
+2\right)\arccos x\right]
\qquad (\textrm{$N$ odd, $N\geqslant3$})
\label{3.44}
\end{eqnarray}
or as \cite[p.\ 169]{Magn66}
\begin{eqnarray}
P_{\nu}^{-N/2+1}(x) &=& \sqrt{\frac{2}{\pi}}\,
\Gamma\left(\nu-{\textstyle\frac{N}{2}}+2\right)
\sum_{k=0}^{(N-3)/2}\frac{\left(k+\frac{N-3}{2}\right)!}
{2^{k}k!\Gamma\left(k+\nu+\frac{3}{2}\right)
\left(\frac{N-3}{2}-k\right)!}
\nonumber \\
&& \times\frac{\sin\left[\left(k+\nu+\frac{1}{2}\right)
\arccos x+\left(k-\frac{N-3}{2}\right)\frac{\pi}{2}\right]}
{(1-x^{2})^{k/2+1/4}}
\qquad (\textrm{$N$ odd, $N\geqslant3$}).
\nonumber \\
&&
\label{3.45}
\end{eqnarray}
In the simplest case of $N=3$, we have
\begin{equation}
P_{\nu}^{-1/2}(x)=\sqrt{\frac{2}{\pi}}\left(1-x^{2}\right)^{-1/4}
\frac{\sin\left[\left(\nu+\frac{1}{2}\right)\arccos x\right]}
{\nu+\frac{1}{2}}.
\label{3.46}
\end{equation}
Using this representation of $P_{\nu}^{-1/2}(x)$ in Eq.\
(\ref{3.39}), the latter being specialized to the case $N=3$, after
some straightforward movements and with the help of the identity
\begin{equation}
\Gamma\left({\textstyle\frac{1}{2}}+\nu\right)
\Gamma\left({\textstyle\frac{1}{2}}-\nu\right)
=\frac{\pi}{\cos(\pi\nu)},
\label{3.47}
\end{equation}
we arrive at
\begin{eqnarray}
G_{\nu}(\boldsymbol{r},\boldsymbol{r}')
&=& -\,\frac{1}{4\pi\cos(\pi\nu)}
\frac{\sqrt{(r^{2}+\rho^{2})(r^{\prime\,2}+\rho^{2})}}
{|\boldsymbol{r}-\boldsymbol{r}'|
\sqrt{r^{2}r^{\prime\,2}+2\rho^{2}\boldsymbol{r}\cdot\boldsymbol{r}'
+\rho^{4}}}
\nonumber \\
&& \times\sin\left[\left(\nu+\frac{1}{2}\right)
\arccos\left(-1+\frac{\displaystyle2\rho^{2}
(\boldsymbol{r}-\boldsymbol{r}')^{2}}
{(r^{2}+\rho^{2})(r^{\prime\,2}+\rho^{2})}\right)\right]
\qquad (N=3).
\label{3.48}
\end{eqnarray}
Equivalence between the result in Eq.\ (\ref{3.48}) and the following
expression (modified to conform with the present notation and
corrected for a sign error):
\begin{eqnarray}
G_{\nu}(\boldsymbol{r},\boldsymbol{r}')
&=& -\,\frac{1}{4\pi\cos(\pi\nu)}
\frac{\sqrt{(r^{2}+\rho^{2})(r^{\prime\,2}+\rho^{2})}}
{|\boldsymbol{r}-\boldsymbol{r}'|
\sqrt{r^{2}r^{\prime\,2}+2\rho^{2}\boldsymbol{r}\cdot\boldsymbol{r}'
+\rho^{4}}}
\nonumber \\
&& \times\sin\left[(2\nu+1)
\arctan\frac{\sqrt{r^{2}r^{\prime\,2}
+2\rho^{2}\boldsymbol{r}\cdot\boldsymbol{r}'+\rho^{4}}}
{\rho|\boldsymbol{r}-\boldsymbol{r}'|}\right]
\qquad (N=3),
\label{3.49}
\end{eqnarray}
given by Demkov and Ostrovsky in Ref.\ \cite{Demk71}, may be easily
established with the aid of the well-known inverse trigonometric
identity
\begin{equation}
\arctan\xi=\frac{1}{2}\arccos\frac{1-\xi^{2}}{1+\xi^{2}}
\qquad (\xi\geqslant0).
\label{3.50}
\end{equation}
\section{The generalized Green's function for the fish-eye problem}
\label{IV}
\setcounter{equation}{0}
A glance at either of Eqs.\ (\ref{3.35}), (\ref{3.39}) or
(\ref{3.40}) reveals that the fish-eye Green's function
$G_{\nu}(\boldsymbol{r},\boldsymbol{r}')$ fails to exist for the
following values of $\nu$:
\begin{equation}
\nu=n+N/2-1
\quad \textrm{or} \quad
\nu=-n-N/2
\qquad (n\in\mathbb{N}),
\label{4.1}
\end{equation}
which, by the way, are solutions of the quadratic equation
\begin{equation}
\nu(\nu+1)=\left(n+\frac{N}{2}\right)\left(n+\frac{N}{2}-1\right).
\label{4.2}
\end{equation}
If either of the conditions set in Eq.\ (\ref{4.1}) holds, one seeks
the generalized Green's function
$\bar{G}_{n+N/2-1}(\boldsymbol{r},\boldsymbol{r}')
\equiv\bar{G}_{-n-N/2}(\boldsymbol{r},\boldsymbol{r}')$,
defined through the limiting relation
\begin{eqnarray}
\bar{G}_{n+N/2-1}(\boldsymbol{r},\boldsymbol{r}')
&=& \lim\nolimits_{\nu(\nu+1)\to(n+N/2)(n+N/2-1)}
\nonumber \\
&& \frac{\partial}{\partial[\nu(\nu+1)]}
\left\{\left[\nu(\nu+1)-\left(n+\frac{N}{2}\right)
\left(n+\frac{N}{2}-1\right)\right]
G_{\nu}(\boldsymbol{r},\boldsymbol{r}')\right\}
\nonumber \\
&=& \frac{1}{2n+N-1}\lim_{\nu\to n+N/2-1}
\frac{\partial}{\partial\nu}
\left[\left(\nu-n-\frac{N}{2}+1\right)
\left(\nu+n+\frac{N}{2}\right)
G_{\nu}(\boldsymbol{r},\boldsymbol{r}')\right].
\nonumber \\
&&
\label{4.3}
\end{eqnarray}
If, for instance, representation (\ref{3.39}) of
$G_{\nu}(\boldsymbol{r},\boldsymbol{r}')$ is used in Eq.\
(\ref{4.3}), this results in
\begin{eqnarray}
\bar{G}_{n+N/2-1}(\boldsymbol{r},\boldsymbol{r}')
&=& (-)^{n}\frac{(n+N-2)!}{4\pi^{N/2}n!}
\frac{\displaystyle\rho^{N/2-1}}
{\displaystyle|\boldsymbol{r}-\boldsymbol{r}'|^{N/2-1}
\left(r^{2}r^{\prime\,2}+2\rho^{2}\boldsymbol{r}\cdot\boldsymbol{r}'
+\rho^{4}\right)^{N/4-1/2}}
\nonumber \\
&& \hspace*{-10em} 
\times\Bigg\{\left[\psi(n+N-1)-\psi(n+1)
+\frac{1}{2n+N-1}\right]P_{n+N/2-1}^{-N/2+1}
\left(-1+\frac{\displaystyle2\rho^{2}
(\boldsymbol{r}-\boldsymbol{r}')^{2}}
{(r^{2}+\rho^{2})(r^{\prime\,2}+\rho^{2})}\right)
\nonumber \\
&& \hspace*{-10em} 
\quad +\,\frac{\displaystyle\partial P_{\nu}^{-N/2+1}
\left(-1+\frac{\displaystyle2\rho^{2}
(\boldsymbol{r}-\boldsymbol{r}')^{2}}
{(r^{2}+\rho^{2})(r^{\prime\,2}+\rho^{2})}\right)}
{\partial\nu}\Bigg|_{\nu=n+N/2-1}\Bigg\},
\label{4.4}
\end{eqnarray}
where
\begin{equation}
\psi(\zeta)=\frac{1}{\Gamma(\zeta)}
\frac{\mathrm{d}\Gamma(\zeta)}{\mathrm{d}\zeta}
\label{4.5}
\end{equation}
is the digamma function. A number of closed-form representations of
the derivative $[\partial
P_{\nu}^{-N/2+1}(x)/\partial\nu]_{\nu=n+N/2-1}$ required in Eq.\
(\ref{4.4}) may be derived from the author's findings for
$[\partial P_{\nu}^{\pm m}(z)/\partial\nu]_{\nu=n}$,
$z\in\mathbb{C}\setminus(-1,1)$, presented in Refs.\
\cite{Szmy09a,Szmy09b}; the simplest, and thus potentially most
useful, of these expressions are listed in Appendix \ref{A}.

In the particular case of $N=3$, Eq.\ (\ref{4.4}) yields simply
\begin{eqnarray}
\bar{G}_{n+1/2}(\boldsymbol{r},\boldsymbol{r}')
&=& \frac{(-)^{n}}{4\pi^{2}}
\frac{\sqrt{(r^{2}+\rho^{2})(r^{\prime\,2}+\rho^{2})}}
{|\boldsymbol{r}-\boldsymbol{r}'|
\sqrt{r^{2}r^{\prime\,2}+2\rho^{2}\boldsymbol{r}\cdot\boldsymbol{r}'
+\rho^{4}}}
\nonumber \\
&& \hspace*{-5em}
\times\Bigg\{\cos\left[(n+1)
\arccos\left(-1+\frac{\displaystyle2\rho^{2}
(\boldsymbol{r}-\boldsymbol{r}')^{2}}
{(r^{2}+\rho^{2})(r^{\prime\,2}+\rho^{2})}\right)\right]
\arccos\left(-1+\frac{\displaystyle2\rho^{2}
(\boldsymbol{r}-\boldsymbol{r}')^{2}}
{(r^{2}+\rho^{2})(r^{\prime\,2}+\rho^{2})}\right)
\nonumber \\
&& \hspace*{-5em}
+\,\frac{\displaystyle\sin\left[(n+1)
\arccos\left(-1+\frac{\displaystyle2\rho^{2}
(\boldsymbol{r}-\boldsymbol{r}')^{2}}
{(r^{2}+\rho^{2})(r^{\prime\,2}+\rho^{2})}\right)\right]}
{2(n+1)}\Bigg\}
\qquad (N=3).
\label{4.6}
\end{eqnarray}
\section{Prospective applications}
\label{V}
\setcounter{equation}{0}
The closed-form representations of the fish-eye Green's function
$G_{\nu}(\boldsymbol{r},\boldsymbol{r}')$ and of its generalized
counterpart $\bar{G}_{n+N/2-1}(\boldsymbol{r},\boldsymbol{r}')$,
found in this work, are certainly interesting for their own
mathematical sake. They appear, however, to be also useful in the
physical context. In a forthcoming report, we shall use them to show
that, despite of claims to the contrary
\cite{Demk71,Leon09a,Leon09b}, in wave optics the infinite Maxwell
fish-eye medium does not possess the same perfect focusing properties
as it has in geometrical optics. Next, it has been confirmed
\cite{Biel10} that the use of either of the closed-form expressions
for $G_{\nu}(\boldsymbol{r},\boldsymbol{r}')$ listed in Section
\ref{III} simplifies greatly the mathematical analysis of
wave-optical properties of cylindrical ($N=2$) and spherical ($N=3$)
gradient-index lenses with the fish-eye refraction index (\ref{1.1})
and of finite radii $r_{\mathrm{lens}}\leqslant\rho\sqrt{2n_{0}-1}$.
Finally, in yet another forthcoming paper, we shall show that there
is a close mathematical relationship between the wavized Maxwell
fish-eye problem in $\mathbb{R}^{N}$ and the $N$-dimensional
Schr{\"o}dinger--Coulomb problem in momentum space; in particular, we
shall provide there an integral expression for the momentum-space
Schr{\"o}dinger--Coulomb Green's function in terms of the fish-eye
Green's function discussed above.
\appendix
\section{Appendix: The derivatives $[\partial
P_{\nu}^{-N/2+1}(x)/\partial\nu]_{\nu=n+N/2-1}$ for
$N\in\mathbb{N}\setminus\{0,1\}$}
\label{A}
\setcounter{equation}{0}
From the relations
\begin{equation}
P_{n}^{-m}(x)=\mathrm{e}^{-\mathrm{i}\pi m/2}\frac{(n-m)!}{(n+m)!}
P_{n}^{m}(x+\mathrm{i}0)
\qquad (0\leqslant m\leqslant n)
\label{A.1}
\end{equation}
and
\begin{eqnarray}
\frac{\partial P_{\nu}^{-m}(x)}{\partial\nu}\bigg|_{\nu=n}
&=& \mathrm{e}^{-\mathrm{i}\pi m/2}\frac{(n-m)!}{(n+m)!}
\frac{\partial P_{n}^{m}(x+\mathrm{i}0)}{\partial\nu}\bigg|_{\nu=n}
-[\psi(n+m+1)-\psi(n-m+1)]P_{n}^{-m}(x)
\nonumber \\
&& \hspace*{20em} (0\leqslant m\leqslant n),
\label{A.2}
\end{eqnarray}
and from a number of closed-form expressions for the derivative
$[\partial P_{\nu}^{m}(z)/\partial\nu]_{\nu=n}$, with
$z\in\mathbb{C}\setminus(-1,1)$ and $0\leqslant m\leqslant n$, found
by the present author in Refs.\ \cite{Szmy09a,Szmy09b}, one may
derive, among others, the following representations of $[\partial
P_{\nu}^{-N/2+1}(x)/\partial\nu]_{\nu=n+N/2-1}$, with
$x\in[-1,1]$ and with \emph{$N$ being an even natural number greater
than zero}:\footnote{Attention! The \emph{Mathematica\/} 7.0.0
function \texttt{LegendreP[nu,mu,2,x]} evaluates incorrectly (the
modulus is correct but the sign is wrong) numerical values of the
associated Legendre functions $P_{2n+1}^{-2n-1}(x)$ and
$P_{2n+2}^{-2n-1}(x)$ for $n\in\mathbb{N}$, $-1\leqslant
x\leqslant1$. An empirically discovered remedy is to subject the
variable $x$ to the action of the function \texttt{SetPrecision}
before it is used as an argument of \texttt{LegendreP}.}
\begin{eqnarray}
\frac{\partial P_{\nu}^{-N/2+1}(x)}
{\partial\nu}\Bigg|_{\nu=n+N/2-1}
&=& P_{n+N/2-1}^{-N/2+1}(x)\ln\frac{1+x}{2}
-2\psi(n+N-1)P_{n+N/2-1}^{-N/2+1}(x)
\nonumber \\
&& \hspace*{-10em}
+\,\frac{n!}{(n+N-2)!}\left(\frac{1-x^{2}}{4}\right)^{N/4-1/2}
\sum_{k=0}^{n}(-)^{k}\frac{(k+n+N-2)!}{k!(k+N/2-1)!(n-k)!}
\nonumber \\
&& \hspace*{-10em}
\quad\times[2\psi(k+n+N-1)-\psi(k+N/2)]
\left(\frac{1-x}{2}\right)^{k}
\nonumber \\
&& \hspace*{-10em}
+\,\left(\frac{1-x}{1+x}\right)^{N/4-1/2}
\sum_{k=0}^{n+N/2-1}(-)^{k}\frac{(k+n+N/2-1)!\psi(k+N/2)}
{k!(k+N/2-1)!(n+N/2-k-1)!}\left(\frac{1-x}{2}\right)^{k},
\nonumber \\
&&
\label{A.3}
\end{eqnarray}
\begin{eqnarray}
\frac{\partial P_{\nu}^{-N/2+1}(x)}
{\partial\nu}\Bigg|_{\nu=n+N/2-1}
&=& P_{n+N/2-1}^{-N/2+1}(x)\ln\frac{1+x}{2}
-2\psi(n+N/2)P_{n+N/2-1}^{-N/2+1}(x)
\nonumber \\
&& \hspace*{-10em}
+\,\frac{n!}{(n+N-2)!}\left(\frac{1-x^{2}}{4}\right)^{N/4-1/2}
\sum_{k=0}^{n}(-)^{k}\frac{(k+n+N-2)!\psi(k+N/2)}
{k!(k+N/2-1)!(n-k)!}\left(\frac{1-x}{2}\right)^{k}
\nonumber \\
&& \hspace*{-10em}
+\,\left(\frac{1-x}{1+x}\right)^{N/4-1/2}
\sum_{k=0}^{n+N/2-1}(-)^{k}\frac{(k+n+N/2-1)!}
{k!(k+N/2-1)!(n+N/2-k-1)!}
\nonumber \\
&& \hspace*{-10em}
\quad\times[2\psi(k+n+N/2)-\psi(k+N/2)]
\left(\frac{1-x}{2}\right)^{k},
\label{A.4}
\end{eqnarray}
\begin{eqnarray}
\frac{\partial P_{\nu}^{-N/2+1}(x)}
{\partial\nu}\Bigg|_{\nu=n+N/2-1}
&=& P_{n+N/2-1}^{-N/2+1}(x)\ln\frac{1+x}{2}
\nonumber \\
&& \hspace*{-10em}
-\,[\psi(n+N-1)+\psi(n+N/2)]P_{n+N/2-1}^{-N/2+1}(x)
\nonumber \\
&& \hspace*{-10em}
+\,\frac{n!}{(n+N-2)!}\left(\frac{1-x^{2}}{4}\right)^{N/4-1/2}
\sum_{k=0}^{n}(-)^{k}\frac{(k+n+N-2)!\psi(k+n+N-1)}
{k!(k+N/2-1)!(n-k)!}\left(\frac{1-x}{2}\right)^{k}
\nonumber \\
&& \hspace*{-10em}
+\,\left(\frac{1-x}{1+x}\right)^{N/4-1/2}
\sum_{k=0}^{n+N/2-1}(-)^{k}\frac{(k+n+N/2-1)!\psi(k+n+N/2)}
{k!(k+N/2-1)!(n+N/2-k-1)!}\left(\frac{1-x}{2}\right)^{k},
\label{A.5}
\end{eqnarray}
\begin{eqnarray}
\frac{\partial P_{\nu}^{-N/2+1}(x)}
{\partial\nu}\Bigg|_{\nu=n+N/2-1} 
&=& P_{n+N/2-1}^{-N/2+1}(x)\ln\frac{1+x}{2}
\nonumber \\
&& \hspace*{-10em} 
-\,(-)^{n}\left(\frac{1-x^{2}}{4}\right)^{-N/4+1/2}
\sum_{k=0}^{N/2-2}\frac{(k+n)!(N/2-k-2)!}
{k!(n+N-k-2)!}\left(\frac{1+x}{2}\right)^{k}
\nonumber \\
&& \hspace*{-10em} 
+\,(-)^{n}\left(\frac{1+x}{1-x}\right)^{N/4-1/2}
\sum_{k=0}^{n+N/2-1}(-)^{k}\frac{(k+n+N/2-1)!}
{k!(k+N/2-1)!(n+N/2-k-1)!}
\nonumber \\
&& \hspace*{-10em}
\quad\times[2\psi(k+n+N/2)-\psi(k+N/2)-\psi(k+1)]
\left(\frac{1+x}{2}\right)^{k},
\label{A.6}
\end{eqnarray}
\begin{eqnarray}
\frac{\partial P_{\nu}^{-N/2+1}(x)}
{\partial\nu}\Bigg|_{\nu=n+N/2-1}
&=& P_{n+N/2-1}^{-N/2+1}(x)\ln\frac{1+x}{2}
\nonumber \\
&& \hspace*{-10em}
-\,[\psi(n+N-1)-\psi(n+1)]P_{n+N/2-1}^{-N/2+1}(x)
\nonumber \\
&& \hspace*{-10em} 
-\,(-)^{n}\frac{n!}{(n+N-2)!}
\left(\frac{1-x}{1+x}\right)^{N/4-1/2}
\sum_{k=0}^{N/2-2}\frac{(k+n+N/2-1)!(N/2-k-2)!}{k!(n+N/2-k-1)!}
\left(\frac{1+x}{2}\right)^{k}
\nonumber \\
&& \hspace*{-10em}
+\,(-)^{n}\frac{n!}{(n+N-2)!}
\left(\frac{1-x^{2}}{4}\right)^{N/4-1/2}
\sum_{k=0}^{n}(-)^{k}\frac{(k+n+N-2)!}{k!(k+N/2-1)!(n-k)!}
\nonumber \\
&& \hspace*{-10em} 
\quad\times[2\psi(k+n+N-1)-\psi(k+N/2)-\psi(k+1)]
\left(\frac{1+x}{2}\right)^{k},
\label{A.7}
\end{eqnarray}
\begin{eqnarray}
\frac{\partial P_{\nu}^{-N/2+1}(x)}
{\partial\nu}\Bigg|_{\nu=n+N/2-1}
&=& P_{n+N/2-1}^{-N/2+1}(x)\ln\frac{1+x}{2}
\nonumber \\
&& \hspace*{-10em}
+\,[\psi(n+1)+\psi(n+N/2)]P_{n+N/2-1}^{-N/2+1}(x)
\nonumber \\
&& \hspace*{-10em}
-\,(-)^{n}n!(n+N/2-1)!\left(\frac{1+x}{1-x}\right)^{N/4-1/2}
\left(\frac{1-x}{2}\right)^{n+N/2-1}
\nonumber \\
&& \hspace*{-10em}
\quad\times\sum_{k=1}^{N/2-1}\frac{(k-1)!}
{(k+n)!(k+n+N/2-1)!(N/2-k-1)!}\left(\frac{1-x}{1+x}\right)^{k}
\nonumber \\
&& \hspace*{-10em}
-\,n!(n+N/2-1)!\left(\frac{1-x}{1+x}\right)^{N/4-1/2}
\left(\frac{1+x}{2}\right)^{n+N/2-1}
\nonumber \\
&& \hspace*{-10em}
\quad\times\sum_{k=0}^{n}(-)^{k}\frac{\psi(n+N/2-k)+\psi(n-k+1)}
{k!(k+N/2-1)!(n-k)!(n+N/2-k-1)!}\left(\frac{1-x}{1+x}\right)^{k},
\label{A.8}
\end{eqnarray}
\begin{eqnarray}
\frac{\partial P_{\nu}^{-N/2+1}(x)}
{\partial\nu}\Bigg|_{\nu=n+N/2-1}
&=& P_{n+N/2-1}^{-N/2+1}(x)\ln\frac{1+x}{2}
\nonumber \\
&& \hspace*{-10em}
+\,[\psi(n+1)+\psi(n+N/2)]P_{n+N/2-1}^{-N/2+1}(x)
\nonumber \\
&& \hspace*{-10em}
-\,(-)^{n}n!(n+N/2-1)!\left(\frac{1-x}{1+x}\right)^{N/4-1/2}
\left(\frac{1-x}{2}\right)^{n+N/2-1}
\nonumber \\
&& \hspace*{-10em}
\quad\times\sum_{k=0}^{N/2-2}\frac{(N/2-k-2)!}
{k!(n+N/2-k-1)!(n+N-k-2)!}\left(\frac{1+x}{1-x}\right)^{k}
\nonumber \\
&& \hspace*{-10em}
-\,(-)^{n}n!(n+N/2-1)!\left(\frac{1+x}{1-x}\right)^{N/4-1/2}
\left(\frac{1-x}{2}\right)^{n+N/2-1}
\nonumber \\
&& \hspace*{-10em}
\quad\times\sum_{k=0}^{n}(-)^{k}\frac{\psi(k+1)+\psi(k+N/2)}
{k!(k+N/2-1)!(n-k)!(n+N/2-k-1)!}\left(\frac{1+x}{1-x}\right)^{k},
\label{A.9}
\end{eqnarray}
\begin{eqnarray}
\frac{\partial P_{\nu}^{-N/2+1}(x)}
{\partial\nu}\Bigg|_{\nu=n+N/2-1}
&=& P_{n+N/2-1}^{-N/2+1}(x)\ln\frac{1+x}{2}
\nonumber \\
&& \hspace*{-10em}
+\,[2\psi(2n+N-1)-\psi(n+N-1)-\psi(n+N/2)]P_{n+N/2-1}^{-N/2+1}(x)
\nonumber \\
&& \hspace*{-10em}
+\sum_{k=0}^{n-1}(-)^{k+n}\frac{2k+N-1}{(n-k)(k+n+N-1)}
\left[1+\frac{n!(k+N-2)!}{k!(n+N-2)!}\right]
P_{k+N/2-1}^{-N/2+1}(x)
\nonumber \\
&& \hspace*{-10em}
-\sum_{k=0}^{N/2-2}(-)^{k+n+N/2}\frac{2k+1}{(n+N/2-k-1)(k+n+N/2)}
P_{k}^{-N/2+1}(x).
\label{A.10}
\end{eqnarray}
If, in turn, \emph{$N$ is an odd natural number greater than 1},
then from Eqs.\ (\ref{3.44}) and (\ref{3.45}) one obtains
\begin{eqnarray}
\frac{\partial P_{\nu}^{-N/2+1}(x)}
{\partial\nu}\Bigg|_{\nu=n+N/2-1}
&=& \frac{2}{\pi}Q_{n+N/2-1}^{-N/2+1}(x)\arccos x
\nonumber \\
&& \hspace*{-10em}
-\,\frac{\left(\frac{N-3}{2}\right)!}
{2^{N/2-2}\sqrt{\pi}}\left(1-x^{2}\right)^{-N/4+1/2}
\sum_{k=0}^{(N-3)/2}(-)^{k}\frac{(k+n)!}
{k!\left(k+n+\frac{N-1}{2}\right)!\left(\frac{N-3}{2}-k\right)!}
\nonumber \\
&& \hspace*{-10em}
\quad\times\left[\psi\left(k+n+{\textstyle\frac{N+1}{2}}\right)
-\psi(k+n+1)\right]\sin[(2k+n+1)\arccos x]
\label{A.11}
\end{eqnarray}
and
\begin{eqnarray}
\frac{\partial P_{\nu}^{-N/2+1}(x)}
{\partial\nu}\Bigg|_{\nu=n+N/2-1}
&=& \frac{2}{\pi}Q_{n+N/2-1}^{-N/2+1}(x)\arccos x
+\psi(n+1)P_{n+N/2-1}^{-N/2+1}(x)
\nonumber \\
&& \hspace*{-10em}
-\,\sqrt{\frac{2}{\pi}}\,n!
\sum_{k=0}^{(N-3)/2}\frac{\left(k+\frac{N-3}{2}\right)!
\psi\left(k+n+\frac{N+1}{2}\right)}
{2^{k}k!\left(k+n+\frac{N-1}{2}\right)!
\left(\frac{N-3}{2}-k\right)!}
\nonumber \\
&& \hspace*{-10em}
\quad\times\frac{\sin\left[\left(k+n+\frac{N-1}{2}\right)
\arccos x+\left(k-\frac{N-3}{2}\right)\frac{\pi}{2}\right]}
{(1-x^{2})^{k/2+1/4}},
\label{A.12}
\end{eqnarray}
where
\begin{eqnarray}
P_{n+N/2-1}^{-N/2+1}(x) 
&=& \frac{\left(\frac{N-3}{2}\right)!}
{2^{N/2-2}\sqrt{\pi}}\left(1-x^{2}\right)^{-N/4+1/2}
\sum_{k=0}^{(N-3)/2}(-)^{k}
\frac{(k+n)!}{k!\left(k+n+\frac{N-1}{2}\right)!
\left(\frac{N-3}{2}-k\right)!}
\nonumber \\
&& \times\sin[(2k+n+1)\arccos x]
\label{A.13}
\end{eqnarray}
or, equivalently,
\begin{eqnarray}
P_{n+N/2-1}^{-N/2+1}(x) &=& \sqrt{\frac{2}{\pi}}\,n!
\sum_{k=0}^{(N-3)/2}\frac{\left(k+\frac{N-3}{2}\right)!}
{2^{k}k!\left(k+n+\frac{N-1}{2}\right)!
\left(\frac{N-3}{2}-k\right)!}
\nonumber \\
&& \times\frac{\sin\left[\left(k+n+\frac{N-1}{2}\right)
\arccos x+\left(k-\frac{N-3}{2}\right)\frac{\pi}{2}\right]}
{(1-x^{2})^{k/2+1/4}}
\label{A.14}
\end{eqnarray}
and also
\begin{eqnarray}
Q_{n+N/2-1}^{-N/2+1}(x) 
&=& \frac{\sqrt{\pi}\left(\frac{N-3}{2}\right)!}
{2^{N/2-1}}\left(1-x^{2}\right)^{-N/4+1/2}
\sum_{k=0}^{(N-3)/2}(-)^{k}
\frac{(k+n)!}{k!\left(k+n+\frac{N-1}{2}\right)!
\left(\frac{N-3}{2}-k\right)!}
\nonumber \\
&& \times\cos[(2k+n+1)\arccos x]
\label{A.15}
\end{eqnarray}
or, equivalently,
\begin{eqnarray}
Q_{n+N/2-1}^{-N/2+1}(x) &=& \sqrt{\frac{\pi}{2}}\,n!
\sum_{k=0}^{(N-3)/2}\frac{\left(k+\frac{N-3}{2}\right)!}
{2^{k}k!\left(k+n+\frac{N-1}{2}\right)!
\left(\frac{N-3}{2}-k\right)!}
\nonumber \\
&& \times\frac{\cos\left[\left(k+n+\frac{N-1}{2}\right)
\arccos x+\left(k-\frac{N-3}{2}\right)\frac{\pi}{2}\right]}
{(1-x^{2})^{k/2+1/4}}.
\label{A.16}
\end{eqnarray}
%
%

%
\end{document}